\theoremstyle{plain}
 \newtheorem{thm}{Theorem}[section]
 \newtheorem{cor}[thm]{Corollary}
 \newtheorem{lem}[thm]{Lemma}
 \theoremstyle{remark}
 \newtheorem{rem}[thm]{Remark}
 \newcommand{\Jac}{\mathrm{Jac}}
 \newcommand{\ord}{\mathrm{ord}}
 \newcommand{\mult}{\mathrm{mult}}
 \newcommand{\C}{\mathbb{C}}
 \newcommand{\R}{\mathbb{R}}
 \newcommand{\Z}{\mathbb{Z}}
\begin{document}

\title[On a property of Fermi curves]{On a property of Fermi curves of $2$-dimensional periodic Schr\"odinger operators}

\author[E.~L\"ubcke]{Eva L\"ubcke}

\address{Mathematics Chair III\\
	Universit\"at Mannheim\\
	D-68131 Mannheim\\ 
	Germany}

\email{eluebcke@mail.uni-mannheim.de}

\subjclass{Primary 14H81; Secondary 14H40}
\keywords{Fermi curves, divisors, Jacobian variety}

\date{June 8, 2016}

\begin{abstract}
	We consider a compact Riemann surface with a holomorphic involution, two marked fixed points of the involution and a divisor obeying an equation up to linear equivalence of divisors involving all this data. Examples of such data are Fermi curves of $2$-dimensional periodic Schr\"odinger operators. We show that the equation has a solution if and only if the two marked points are the only fixed points of the involution.
\end{abstract}
\maketitle
\section{Introduction}
Let $X$ be a  compact Riemann surface of genus $g<\infty$ and $\sigma:X\to X$ a holomorphic involution with two fixed points $P_1$ and $P_2$. Then the linear equivalence $D+\sigma(D)\simeq K + P_1+P_2$ is solvable by a divisor $D$ of degree $g$ if and only if $P_1$ and $P_2$ are the only fixed points of $\sigma$. 
It is known that on Fermi curves of $2$-dimensional periodic Schr\"{o}dinger operators, there exists 
a holomorphic involution $\sigma$ with two fixed points such that the pole divisor of the corresponding normalized eigenfunctions obeys this linear equivalence, see \cite{N-V}. It was remarked in \cite{N-V} without proof that I.R. Shafarevich and V. V. Shokurov pointed out that  $D+\sigma(D)\simeq K + P_1+P_2$ can hold if and only if $P_1$ and $P_2$ are the only fixed points of $\sigma$. To prove this assertion, we basically use the results reflecting the connection between the Jabobian variety and the Prym variety which was shown in \cite{Mu}. Since we are mainly using the ideas shown there and not the whole concept, we will explain later how this connection involves here. 
\section{A two-sheeted covering}
\noindent Let $X$ be a compact Riemann surface, $\sigma$ a holomorphic involution on $X$ and $P_1,P_2\in X$ fixed points of $\sigma$, i.e.  $\sigma(P_i)=P_i$ for $i=1,2$. For $p,q\in X$ let $p \sim q :\Leftrightarrow (p=q \lor p=\sigma(q))$
and define $X_\sigma:= X/\sim$. Let $\pi: X\rightarrow X_\sigma$ be the canonical two-sheeted covering map. Since the subgroup of $\mathrm{Aut}(X)$ generated by $\sigma$ is $\Z_2$, $X_\sigma$ is a compact Riemann surface and $\pi: X\to X_\sigma$ is holomorphic, compare \cite[Theorem III.3.4.]{Mir}. Due to the construction of $X_\sigma$, the fixed points of $\sigma$ coincide with the ramification points of $\pi$. 
The set of ramification points of $\pi$ we denote by $r_\pi$. Then the map $\pi$ is locally biholomorphic on $X\setminus r_\pi$, see \cite[Corollary I.2.5]{Fo}. We define the ramification divisor of $\pi$ on $X$ as $R_{\pi}:=\sum_{p\in r_\pi}p$. In general, the ramification divisor is defined as $R_{\pi}:=\sum_{p\in X}(\mathrm{mult}_{p}(\pi)-1)\cdot  p$, where  the multiplicitiy $\mathrm{mult}_{p}(\pi)$ of $\pi$ in $p$ denotes the number of sheets which meet in $p$, compare \cite[Definition II.4.2]{Mir}. Since $\mathrm{mult}_{ p}(\pi)=1$ for $p\in X\setminus r_\pi$ and $\mathrm{mult}_{ p}(\pi)=2$ for $p\in r_\pi$, this coincides with the above definition.
Furthermore, $b_\pi:=\pi[r_\pi]$ is the set of branchpoints of $\pi$ on $X_\sigma$.
The involution $\sigma$ extends to an involution on the divisors on $X$ by $\sigma\big(\sum_{p\in X} a(p) p\big):= \sum_{p\in X}a(p) \sigma(p)$ which we also denote as $\sigma$. So the degree of a divisor is conserved under $\sigma$. We define the pullback of a point $p_\sigma\in X_\sigma$ as \[\pi^\ast p_\sigma := \sum_{p\in \{\pi^{-1}[\{p_\sigma\}]\}} \mult_{p} (\pi) p.\] With this definition, the pullback of a divisor $D:=\sum_{p_\sigma\in X_\sigma}a(p_\sigma) p_\sigma$ on $X_\sigma$ is defined as $\pi^\ast D:= \sum_{p_\sigma\in X_\sigma} a(p_\sigma)\pi^\ast p_\sigma$.
Since $\pi$ is a non-constant holomorphic map between two Riemann surfaces, every meromorphic $1$-form on $X_\sigma$ can be pulled back to a meromorphic $1$-form $\omega:=\pi^\ast\omega_\sigma$ on $X$, compare for example \cite[Section IV.2.]{Mir}. 
\begin{lem}\label{lem:divisor}
	Let $X, X_\sigma$ and $\pi$ be given as above and let $\omega_\sigma$ be a non-constant meromorphic $1$-form on $X_\sigma$.
	\begin{enumerate}[(a)]
	\item The divisor of $\pi^\ast \omega_\sigma$ on $X$ is given by 
	$
	(\pi^\ast \omega_\sigma)= \pi^\ast(\omega_\sigma) + R_\pi.
	$
	\item Let $g_\sigma$ be the genus of $X_\sigma$. Then there exists a divisor $\widetilde K$ on $X$ with $\deg (\widetilde K)=2g_\sigma-2$ such that $(\pi^\ast\omega_\sigma)= \widetilde K+\sigma(\widetilde K)+R_\pi$.
	\end{enumerate}
\end{lem}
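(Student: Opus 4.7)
For part (a), the plan is to verify the equality coefficient by coefficient, using a local computation at each point of $X$. Away from the ramification locus $r_\pi$, the map $\pi$ is a local biholomorphism, so a local parameter on $X_\sigma$ near $\pi(p)$ pulls back to a local parameter on $X$ near $p$. Thus $\ord_p(\pi^\ast\omega_\sigma) = \ord_{\pi(p)}(\omega_\sigma)$; since $\mult_p(\pi)=1$, this matches the coefficient of $p$ in $\pi^\ast(\omega_\sigma)$, and $p$ does not appear in $R_\pi$, so both sides agree. At a ramification point $p\in r_\pi$, I would choose local coordinates $w$ near $p$ and $z$ near $\pi(p)$ so that $\pi$ has the form $z=w^2$; writing $\omega_\sigma = f(z)\,dz$ locally with $\ord_0 f = k$, I get $\pi^\ast\omega_\sigma = 2w\,f(w^2)\,dw$, hence $\ord_p(\pi^\ast\omega_\sigma) = 2k+1$. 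This is exactly $\mult_p(\pi)\cdot k + 1$, matching the coefficient on the right-hand side (contribution $2k$ from $\pi^\ast(\omega_\sigma)$ plus $1$ from $R_\pi$).

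For part (b), the idea is to split $\pi^\ast(\omega_\sigma)$ into two halves exchanged by $\sigma$ and then invoke (a). Writing $(\omega_\sigma) = \sum_{p_\sigma} a(p_\sigma)\, p_\sigma$, I observe that each $\pi^\ast p_\sigma$ can be written in the form $q + \sigma(q)$ for some $q \in \pi^{-1}(\{p_\sigma\})$: if $p_\sigma \notin b_\pi$, the two preimages are distinct and swapped by $\sigma$, so any choice of $q$ from the fiber works; if $p_\sigma \in b_\pi$, then $\pi^{-1}(\{p_\sigma\}) = \{q\}$ with $\sigma(q) = q$, and $\pi^\ast p_\sigma = 2q = q + \sigma(q)$. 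Choosing such a preimage $q_{p_\sigma}$ for each $p_\sigma$ in the support of $(\omega_\sigma)$ and setting
\[
\widetilde K := \sum_{p_\sigma} a(p_\sigma)\, q_{p_\sigma},
\]
I obtain $\widetilde K + \sigma(\widetilde K) = \pi^\ast(\omega_\sigma)$, while $\deg \widetilde K = \sum_{p_\sigma} a(p_\sigma) = \deg(\omega_\sigma) = 2g_\sigma - 2$, using that every non-zero meromorphic $1$-form on a compact Riemann surface of genus $g_\sigma$ has divisor of degree $2g_\sigma-2$. Combining this identity with part (a) yields the claim.

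I do not anticipate a serious obstacle; the content is purely a local pullback computation together with a combinatorial splitting of the pullback divisor. The only point that requires some care is the separate treatment of ramified and unramified points in (a), because the multiplicities behave differently, and it is precisely $R_\pi$ that absorbs the discrepancy between $\ord_p(\pi^\ast\omega_\sigma)$ and $\mult_p(\pi)\cdot\ord_{\pi(p)}(\omega_\sigma)$ at ramification points. The splitting in (b) is non-canonical, but the statement only asserts existence, so the choice of one preimage per branch point is harmless.
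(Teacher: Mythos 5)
Your proposal is correct and follows essentially the same route as the paper: part (a) rests on the order formula $\ord_p(\pi^\ast\omega_\sigma)=(1+\ord_{\pi(p)}(\omega_\sigma))\mult_p(\pi)-1$ (which the paper cites from Miranda and you re-derive by the local computation $z=w^2$), and part (b) uses the identical non-canonical choice of one preimage per point of the support of $(\omega_\sigma)$ to split $\pi^\ast(\omega_\sigma)$ as $\widetilde K+\sigma(\widetilde K)$. No gaps.
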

\begin{proof}
	\begin{enumerate}[(a)]
		\item Due to \cite[Lemma IV.2.6]{Mir} one has for 
		$p\in X$ that 
		\[
				\mathrm{ord}_{p}(\pi^\ast \omega_\sigma)= (1+\mathrm{ord}_{\pi(p)}(\omega_\sigma))\mathrm{mult}_{p}(\pi)-1
				\] 
		with $\mathrm{ord}_p(\pi^\ast\omega_\sigma)$ as defined in \cite[Section IV.1.9]{Mir}.
				Inserting this into the definition of $(\pi^\ast \omega_\sigma)= \sum_{p \in X} (\ord_p(\pi^\ast \omega)) $  
				yields the assertion.
		\item One has $\deg K_\sigma=\deg(\omega_\sigma)=2g_\sigma-2$ where $K_\sigma$ is the canonical divisor on $X_\sigma$. Let $p_\sigma\in X_\sigma$ be a point in the support of $(\omega_\sigma)$ as defined in \cite[Section V.1]{Mir}. For $p_\sigma\not\in b_\pi$ one has $\pi^\ast p_\sigma= p + \sigma(p)$ with $p\neq \sigma(p)\in X$ and for $p_\sigma\in b_\pi$ it is $\pi^\ast p_\sigma=2p$ with $p\in r_\pi$. 
		For $p_\sigma\not\in b_\pi$, let one of the pulled back points in  $\pi^\ast p_\sigma$ be the contribution to $\widetilde K$ and for $p_\sigma\in b_\pi$ the pulled back point is counted with multiplicity one in $\widetilde K$. Then $\pi^\ast(K)=\widetilde K+\sigma(\widetilde K)$ and the claim follows from (a). 
	\end{enumerate}
\end{proof}
	Now we are going to construct a symplectic cycle basis of $H_1(X,\Z)$ from a symplectic cycle basis of $H_1(X_\sigma,\Z)$. The holomorphic map $\sigma:X\to X$ induces a homomorpism of $H_1(X,\Z)$ which we denote as
	\[
	\sigma_\sharp: H_1(X,\Z)\to H_1(X,\Z), \quad  \gamma\mapsto \sigma_\sharp\gamma.
	\] 
	Let $g_\sigma$ be the genus of $X_\sigma$ and $A_{\sigma,1},\ldots, A_{\sigma,g_\sigma}$, $B_{\sigma,1}, \ldots, B_{\sigma,g_\sigma}$ be represantatives of a symplectic basis of $H_1(X_\sigma,\Z)$, i.e. 
	\[A_{\sigma,i}\star A_{\sigma,\ell}=B_{\sigma,i}\star B_{\sigma,\ell}=0 \ \mbox{ and } \  A_{\sigma,i}\star B_{\sigma,\ell}=\delta_{i\ell},\]
	where $\star$ is the intersection product between two cycles.
    From Riemann surface theory it is known that such a basis exists, compare e.g. \cite[Section VIII.4]{Mir}.
	Due to Hurwitz's Formula, e.g. \cite[Theorem II.4.16]{Mir}, one knows that that $\sharp b_\pi=2n$ is even for a two sheeted covering $\pi:X\to X_\sigma$ and that the genus $g$ of $X$ is given by $g=2g_\sigma+n-1$. Hence a basis of $H_1(X,\Z)$ consists of $4g_\sigma+2n-2$ cycles. The aim is to construct a symplectic basis of $H_1(X,\Z)$ which we denote as $A_i,\sigma_\sharp A_i, B_i,\sigma_\sharp B_i$ and $C_j,D_j$ and which has the following two properties: 
    First of all, the only non-trivial pairwise intersections between elements of the basis of  $H_1(X,\Z)$ must be given by
	   \begin{equation}\label{eqn:canonical_zero_prym}
	   A_i \star  B_i = \sigma_\sharp  A_i \star \sigma_\sharp  B_i=  C_j \star  D_j = 1.
	   \end{equation}
	Secondly, the involution $\sigma_\sharp$ has to map $A_i$ to $\sigma_\sharp A_i$ and vice versa, $B_i$ to $\sigma_\sharp B_i$ and vice versa and has to act on $C_j$ and $D_j$ as $\sigma_\sharp C_j=-C_j$ and $\sigma_\sharp D_j=-D_j$.
	Here, and from now on, we consider $i,\ell\in \{1,\dots, g_\sigma\}$ and $j,k\in \{1,\dots,n-1\}$ as long as not pointed out differently. 
	The difference in the notation of the cycles indicates the origin of these basis elements: the $A$- and $B$-cycles on $X$ will be constructed via lifting a certain symplectic cycle basis of $H^1(X_\sigma,\Z)$ via $\pi$ and the $C$- and $D$-cycles originate from the branchpoints of $\pi$.\vspace{.1cm}\\ 
	\begin{minipage}{.55\textwidth}
		We will start by constructing the $C$- and $D$-cycles. A sketch of the idea how to do this is shown for $n=3$ and $g_\sigma=0$ in figure \ref{fig:zykel} . We connect the points in $b_\pi$ pairwise by paths $s_j$ for $j=1,\dots,n$. The set of points corresponding to a path $s_j:[0,1]\to X_\sigma$ we denote by $[s_j]:=\{s_j(t)\,\mid \, t\in [0,1]\}$ and use the same notation for any other path considered as a set of points in $X$ or $X_\sigma$. Let $[s_j]^\circ$ be the corresponding set with $t\in (0,1)$. The paths $s_j$ are constructed in such a way that every branchpoint is connected with exactly one other branchpoint and such that $s_k\cap s_j=\emptyset$ for $k\neq j$. This is possible since the branchpoints lie discrete on $X_\sigma$: suppose the first two branchpoints are connected by $s_1$ such that $s_1$ contains no other branchpoint. Then one can find a small open tubular neighborhood $N(s_1)$ of $s_1$ in $X_\sigma$ with boundary $\partial N(s_1)$ in $X_\sigma$  isomorphic to $S^1$.  To see that $X_\sigma\setminus [s_1] $ is 
			\end{minipage} \hspace{.5cm}
		\begin{minipage}{.35\textwidth}
			\begin{figure}[H]
				\includegraphics[scale=.3]{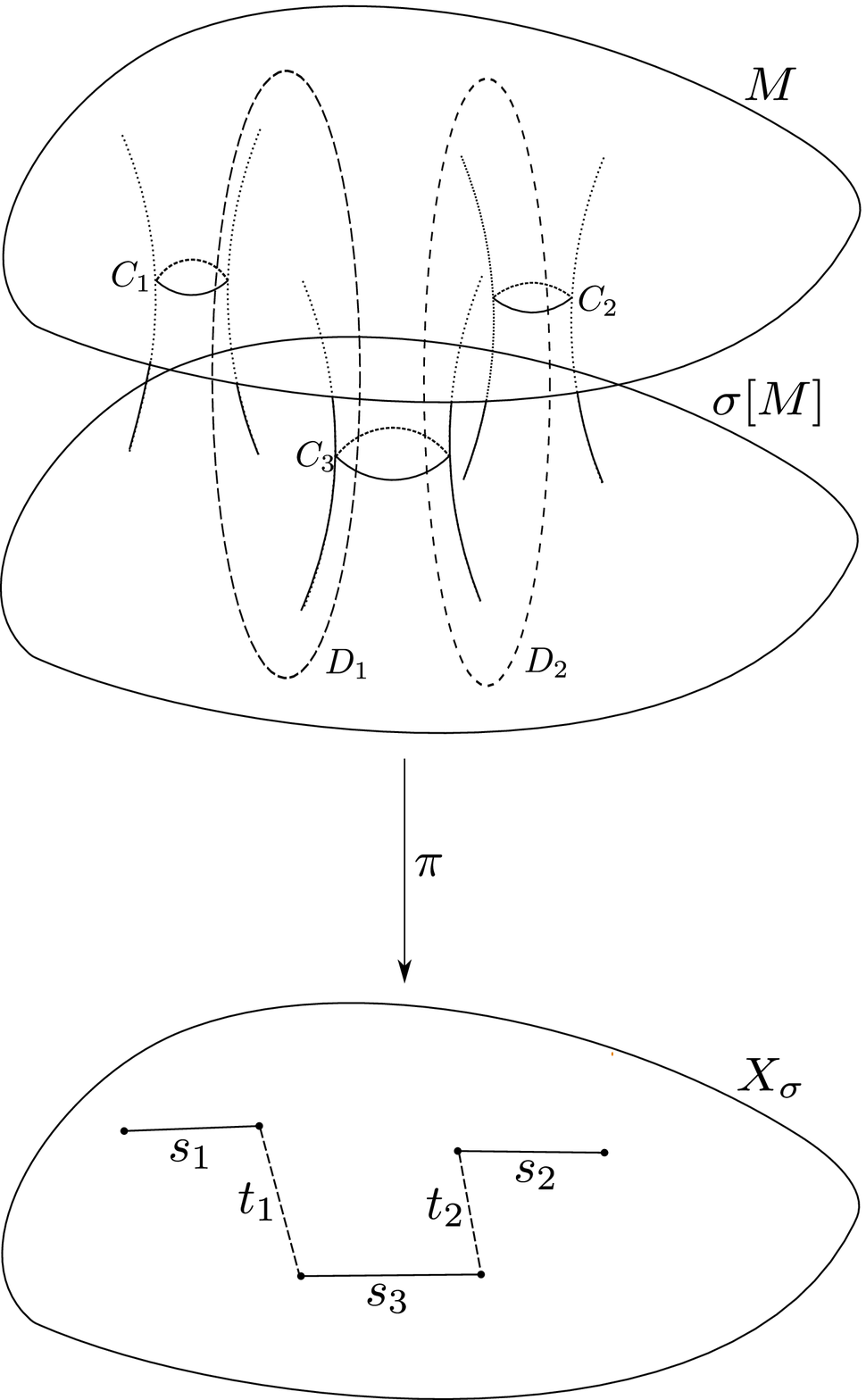} \vspace{-.2cm}
				\caption{}\label{fig:zykel}
			\end{figure} 
		\end{minipage}\\
        path connected, let $\gamma$ be a path in $X_\sigma$ which intersects $\partial N(s_1)$ in the two points $p_1, p_2\in X_\sigma$. Then there is a path $\tilde\gamma$ such that $\tilde\gamma|_{X_\sigma\setminus N(s_1)}=\gamma|_{X_\sigma\setminus N(s_1)}$ and such that the points $p_1$ and $p_2$ are connected via a part of $\partial N(s_1)$. Hence $X_\sigma\setminus [s_1]$ is path connected. Like that one can gradually choose $s_2,\dots s_n$. To find a path $s_j$ not intersecting $s_1, \dots s_{j-1}$, consider $X_\sigma\setminus ([s_1]\cup \dots \cup [s_{j-1}])$ which is path connected and repeat the above procedure until all branchpoints are sorted in pairs. The preimage of $s_j$ under $\pi$ yields two paths in $X$ which both connect the preimage of the connected two  branchpoints. These preimages are ramification points of $\pi$ and we denote them as $b_j^1$ and $b_j^2$. A suitable linear combination of the two paths on $X$ then defines a cycle $C_j$ for $j=1,\dots,n$.
		Since $\pi$ is unbranched on $X\setminus r_\pi$, i.e. a homeomorphism, and since $\pi[r_\pi]=b_\pi\subset [s_1]\cup\dots\cup [s_n]$, $\pi^{-1}[X_\sigma\setminus ([s_1]\cup\dots \cup [s_n])]$ consists of two disjoint connected manifolds whose boundaries both are equal to $\pi^{-1}[s_1]\cup\dots \cup \pi^{-1}[s_n]$ and $\sigma$ interchanges those manifolds. We will call them $M$ and $\sigma[M]$.
		Since the $n$ $C$-cycles are the boundary of $M$ respectively $\sigma[M]$, they are homologous to another, i.e. 
		$
		 C_n=-\sum_{i=1}^{n-1} C_i,
		$
		so this construction yields maximal $n-1$ $C$-cycles which are not homologous to each other. These $n$ cycles we orientate as the boundary of the Riemann surface $M$. We will see later on that, due to the intersection numbers, the cycles $C_1,\dots, C_{n-1}$ are not homologous to each other. 
By construction, each cycle $C_j$ contains the two ramification points $b_j^1$ and $b_j^2$ of $\pi$ and no other ramification points. \\
		The next step is to construct $n-1$ $D$-cycles such that one has 
		$C_j \star D_k = \delta_{jk}$.
    We will see that it is possible to connect $\pi(b_j^2)$ with $\pi(b_{j+1}^1)$ by a path $t_j$ for $j=1,\dots, n-1$ such that $t_j\cap t_k=\emptyset$ for $j\neq k$. 
    Since $X_\sigma\setminus ([s_1]\cup\dots\cup [s_n])$ is path connected, also $X_\sigma\setminus ([s_1]^\circ\cup [s_2]^\circ \cup [s_3]\cup\dots\cup [s_n])$ is path connected. So one can connect $b_1^2$ with $b_2^1$ with a path $t_1$ in $X_\sigma$ not intersecting $s_3,\dots,s_n$ and the path $s_1 + t_1 +s_2$ in $X_\sigma$ contains no loop. As above, one can chose a small open neighborhood $N([s_1]\cup[t_1]\cup [s_2])$ with boundary isomorphic to $S^1$. Therefore, $X_\sigma\setminus ([s_1]\cup\dots\cup [s_n]\cup [t_1])$ is path connected. Repeating this procedure shows that $X_\sigma\setminus ([s_1]\cup \dots \cup [s_n]\cup [t_1]\cup\dots\cup [t_j])$ remains path connected and that $\sum_{m=1}^j (s_m+t_m)+s_{j+1}$ contains no loop for $j=1,\dots,n-1$. This yields the desired $n-1$ paths $t_j$ in $X_\sigma$.
	Lifting these paths via $\pi$ yields each $n-1$ paths on $M$ and $n-1$ paths on $\sigma[M]$. The paths on $M$ and $\sigma[M]$ which result from the lift of $t_j$ both start at $b_j^2$ and end in $b_{j+1}^1$. Hence identifying these end points with each other yields a cycle on $X$ which we denote as $\widetilde D_j$. We orientate $\widetilde D_j$ such that $C_j\star  \widetilde D_j=1$ and $C_{j+1}\star\widetilde D_j=-1$ for $j\in \{1,\dots,n-1\}$. Due to the construction of $\widetilde D_j$ one has $C_{i}\star \widetilde D_j=0$ for $i\not\in\{j,j+1\}$.  Defining $D_j:=\sum_{i=j}^{n-1} \widetilde D_i$ yields for $k<j$
\begin{align*}
C_j\star D_j & = C_j \star \sum\limits_{l=j}^{n-1} \widetilde D_l = C_j\star \widetilde D_j = 1, \quad 
C_k\star D_j = C_k \star \sum\limits_{l=j}^{n-1} \widetilde D_l = 0\\
C_j \star D_k & = C_j \star \sum\limits_{l=k}^{n-1} \widetilde D_l = C_j \star (\widetilde D_j + \widetilde D_{j-1}) = 1-1=0
\end{align*}
and hence $n-1$ cycles which obey $C_k\star D_j=\delta_{kj}$. Two cycles can not be homologeous to each other if the intersection number of each one of those cycles with a third cycle is not equal, hence $C_k\star D_j=\delta_{kj}$ implies that the above construction yields $2n-2$ cycles $C_j$ and $D_j$ which are not homologous to each other.
		To construct the missing $4g_\sigma$ cycles,
		we choose a symplectic cycle basis $A_{\sigma,i}, B_{\sigma,i}$ of $H_1(X_\sigma,\Z)$ such that they intersect none of the paths $s_1,\dots,s_n$ and $t_1,\dots, t_{n-1}$. 
        This is possible since all of these paths in $X_\sigma$ are connected and hence can be contracted to a point. On the preimage of $X_\sigma\setminus\{\bigcup_{j=1}^{n-1}([s_j]\cup [t_j])\cup [s_n]\}$, the map $\pi$ is a homeomorphism. So each of the cycles in $H_1(X_\sigma,\Z)$ is lifted to one cycle in $M$ and one cycle in $\sigma[M]$ via $\pi$ and those two cycles are interchanged by $\sigma$. Thus 
        lifting the whole basis yields  $4g_\sigma$ cycles on $X$ where we denote the $2g_\sigma$ cycles lifted to $M$ as $A_i$ and $B_i$ and the corresponding cycles lifted to $\sigma[M]$ as $\sigma_\sharp A_i$ and $\sigma_\sharp B_i$. Then these cycles obey the desired transformation behaviour under $\sigma_\sharp$. Since $M$ and $\sigma[M]$ are disjoint, the intersection number of the lifted cycles on $X$ stays the same as the intersection number of the corresponding cycles on $X_\sigma$ if two cycles are lifted to the same sheet $M$ respectively $\sigma[M]$ or equals zero if they are lifted to different sheets. Furthermore, the construction of these cycles ensured that the lifted $A$- and $B$-cycles do not intersect any of the $C$- and $D$-cycles on $X$. Hence $A_i,\sigma_\sharp A_i, B_i, \sigma_\sharp B_i, C_j$ and $D_j$ are in total $4g_\sigma+2n-2$ cycles which obey condition \eqref{eqn:canonical_zero_prym}. So by Hurwitz Formula, they represent a symplectic basis of $H_1(X,\Z)$ and the $A$- and $C$-cycles are disjoint. That the $C$- and $D$-cycles constructed like this have the desired transformation behavior under $\sigma_\sharp$ is shown in the next lemma.
	\begin{lem}\label{lem:cd-cycles} For $C_j,D_j\in H_1(X,\Z)$ as defined above one has
						$\sigma_\sharp  C_j=- C_j$ and $\sigma_\sharp  D_j=- D_j$.
		\end{lem}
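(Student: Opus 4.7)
The plan is to exploit the construction of the cycles as concatenations of two lifts of a single path in $X_\sigma$, and to use the fact that the endpoints of those lifts are ramification points, which are fixed by $\sigma$.

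\textbf{Step 1: Describe the lifts explicitly.} Fix $j \in \{1,\ldots,n\}$ and consider the path $s_j \colon [0,1] \to X_\sigma$ connecting two branchpoints $\pi(b_j^1)$ and $\pi(b_j^2)$, with $[s_j]^\circ \subset X_\sigma \setminus b_\pi$. Since $\pi$ is an unramified double cover over $X_\sigma \setminus b_\pi$ and $\sigma$ is the nontrivial deck transformation there, the preimage $\pi^{-1}[s_j^\circ]$ splits into two disjoint arcs, one in $M$ and one in $\sigma[M]$, interchanged by $\sigma$. Taking closures and using that $b_j^1, b_j^2 \in r_\pi$ are fixed by $\sigma$, I get two lifts $\widetilde s_j^{\,1} \subset \overline{M}$ and $\widetilde s_j^{\,2} \subset \overline{\sigma[M]}$, both running from $b_j^1$ to $b_j^2$, such that $\sigma \circ \widetilde s_j^{\,1} = \widetilde s_j^{\,2}$ as parametrized paths (since $\sigma$ fixes the endpoints).

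\textbf{Step 2: Handle the $C$-cycles.} By construction, $C_j$ is the concatenation $\widetilde s_j^{\,1} \cdot (\widetilde s_j^{\,2})^{-1}$ (up to a global sign, which does not affect the claim). Applying $\sigma_\sharp$ and using Step 1:
\[
\sigma_\sharp C_j \;=\; (\sigma\circ\widetilde s_j^{\,1}) \cdot (\sigma\circ\widetilde s_j^{\,2})^{-1} \;=\; \widetilde s_j^{\,2}\cdot (\widetilde s_j^{\,1})^{-1} \;=\; -C_j,
\]
as oriented cycles in $H_1(X,\Z)$.

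\textbf{Step 3: Handle the $D$-cycles.} The same argument applies to each $\widetilde D_j$: the path $t_j$ lifts to two arcs $\widetilde t_j^{\,1} \subset M$ and $\widetilde t_j^{\,2} \subset \sigma[M]$ from $b_j^2$ to $b_{j+1}^1$, both of whose endpoints lie in $r_\pi$ and are therefore fixed by $\sigma$. Thus $\sigma \circ \widetilde t_j^{\,1} = \widetilde t_j^{\,2}$ as paths, and as before $\sigma_\sharp \widetilde D_j = -\widetilde D_j$. Since $\sigma_\sharp$ is a group homomorphism and $D_j = \sum_{i=j}^{n-1} \widetilde D_i$, linearity gives $\sigma_\sharp D_j = -D_j$.

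\textbf{Expected obstacle.} The only subtle point is orientation bookkeeping: one must check that $\sigma$ indeed sends each lift to the other with the \emph{same} direction of parametrization (so that the concatenation flips and the sign is $-1$, not $+1$). This follows because the endpoints $b_j^1, b_j^2 \in r_\pi$ are fixed points of $\sigma$, forcing the start and end of $\sigma \circ \widetilde s_j^{\,1}$ to coincide with those of $\widetilde s_j^{\,2}$; the arc is then determined by its endpoints within the sheet $\sigma[M]$ up to homotopy rel boundary inside $\overline{\sigma[M]}$, which is all that is needed at the level of $H_1(X,\Z)$. Apart from this, everything is formal.
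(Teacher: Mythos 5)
\noindent Your proposal is correct and follows essentially the same route as the paper: both arguments rest on the observation that each of $C_j$ and $\widetilde D_j$ is assembled from the two lifts of a single path in $X_\sigma$, that $\sigma$ interchanges these lifts while fixing their endpoints in $r_\pi$, and that this reverses the orientation of the concatenated cycle; the extension to $D_j=\sum_{i=j}^{n-1}\widetilde D_i$ by linearity of $\sigma_\sharp$ is likewise implicit in the paper. Your version is somewhat more explicit about the orientation bookkeeping, but there is no substantive difference.
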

		\begin{proof}
Every cycle $C_j$ is the preimage of a path in $X_\sigma$ and $X_\sigma$ is invariant under $\sigma$. So $\sigma[C_j]=[C_j]$ and the two points $b_j^1$ and $b_j^2$ stay fixed. Therefore $\sigma_\sharp C_j=\pm C_j$.
Since $\sigma$ commutes the two lifts of the path $s_j$ in $X_\sigma$, i.e. $b_j^1$ and $b_j^2$ are the only fixed points of $\sigma$ on $C_j$, one has $\sigma_\sharp C_j=-C_j$.
By the same means, since $D_j$ also consists of the two lifts of $t_j$ which are interchanged by $\sigma$, one also  has $\sigma_\sharp D_j=-D_j$. 
\end{proof}
\section{Decomposition of $H_1(X,\Z)$}
With help of the Abel map $\mathrm{Ab}$ one can identify the elements of $H_1(X,\Z)$ with a lattice in $\C^g$ such that $\Jac(X)\simeq \C^g/\Lambda$, compare \cite[Section VIII.2]{Mir}. To do so, let $\omega_1,\ldots, \omega_{g}\in H^0(X,\Omega)$ be a basis of the $g=2g_\sigma+n-1$ holomorphic differential forms on $X$ which are normalized with respect to the $A$-, $\sigma_\sharp A$- and $C$-cycles, i.e. 
	 	\begin{equation}\label{eqn:hol_norm}
	 	\oint\limits_{A_i}\omega_\ell = \delta_{i\ell}, \quad \oint\limits_{\sigma_\sharp  A_i}\omega_{g_\sigma+\ell} = \delta_{i\ell}, \quad 
	 	\oint\limits_{C_j}\omega_{2g_\sigma+k} =  \delta_{jk} 
	 	\end{equation}
	 	and all other integrals over one of the $A$- and $C$-cycles with another element of the basis of $H^0(X,\Omega)$ are equal to zero. Furthermore, note that the construction of the $A$-cycles yields $\sigma^\ast\omega_i=\omega_{g_\sigma+i}$ for $i=1,\dots, g_\sigma$ and that Lemma \ref{lem:cd-cycles} implies $\sigma^\ast\omega_{2g_\sigma+j}=-\omega_{2g_\sigma+j}$.
        We define 
        \begin{equation}\label{eqn:omegatransformed}
        \omega_i^\pm:=\frac{1}{2}(\omega_i\pm\omega_{g_\sigma+i}) \quad \text{ and } \quad \omega^-_{g_\sigma+j}:=\omega_{2g_\sigma+j}.
        \end{equation}
        Direct calculation shows that these differential forms also yield a basis of $H^0(X,\Omega)$. For a path $\gamma$ in $X$, we define the vectors
	 		 \begin{align*}
	 		 \Omega_{\gamma} := \bigg(
	 		 \int\limits_{\gamma} \omega_k 
	 		 \bigg)_{k=1}^g, \	 \Omega_{\gamma}^+ := 
	 		 \bigg(
	 	     \int\limits_{\gamma} \omega_k^+ 
	 		 \bigg)_{k=1}^{g_\sigma}, \
	 		 	 \Omega_{\gamma}^- := 
	 		 	 \bigg(
	 		 	 \int\limits_{\gamma} \omega_k^-
	 		 	\bigg)_{k=1}^{g_\sigma+n-1}.
	 		 \end{align*}
and the following lattices generated over $\Z$ as
	 	\begin{equation}\label{eqn:lattice}
	 		 \begin{aligned}
	 		 \Lambda & := \langle \Omega_{ A_i}, \Omega_{\sigma_\sharp  A_i}, \Omega_{ C_j},\Omega_{ B_i},\Omega_{\sigma_\sharp  B_i}, \Omega_{ D_j}\rangle_{\substack{
	 		 	i =1,\dots, g_\sigma\phantom{dk}\\ j=1,\dots, n-1
	 		 	}}\\
	 		 \Lambda_+ & := \langle \Omega_{{A}_i+\sigma_\sharp {A}_i}^+, \Omega_{{B}_i+\sigma_\sharp {B}_i}^+\rangle_{i=1,\dots,g_\sigma}\\ 
	 		 \Lambda_- & := \langle\Omega_{{A}_i-\sigma_\sharp {A}_i}^-, \Omega_{{B}_i-\sigma_\sharp {B}_i}^-, \Omega_{ C_j}^-, \Omega_{{D}_j}^-\rangle_{\substack{
	 		 		i =1,\dots, g_\sigma\phantom{dk}\\ j=1,\dots, n-1
	 		 	}}. \\
	 		 \end{aligned}
	 		 \end{equation}
	Furthermore, the mapping
	\begin{equation*}
	\Phi:\C^{g} \to \C^{g_\sigma} \oplus \C^{g_\sigma+n-1}, \ \begin{pmatrix}
	v_1\\ \vdots \\ v_g
	\end{pmatrix} \mapsto
    \begin{pmatrix}
	\frac{1}{2}(v_1+v_{g_\sigma+1}) \\ \vdots \\ \frac{1}{2}(v_{g_\sigma}+v_{2g_\sigma}) 
	\end{pmatrix}
	\oplus 
	\begin{pmatrix}
	\frac{1}{2}(v_1-v_{g_\sigma+1}) \\ \vdots \\ \frac{1}{2}(v_{g_\sigma}-v_{2g_\sigma}) \\ v_{2g_\sigma+1} \\ \vdots \\ v_{2g_\sigma +n-1}
	\end{pmatrix}
	\end{equation*}
	is obviously linear and bijective. Hence $\Phi$ is a vector space isomorphism.
\begin{lem}\label{lem:iso}
	 For every path $\gamma$ on $X$ one has
	\[\Phi(\Omega_\gamma) = \Omega^+_{\gamma}\oplus \Omega^-_\gamma = \Omega^+_{\frac{1}{2}(\gamma+\sigma_\sharp\gamma)}\oplus \Omega^-_{\frac{1}{2}(\gamma-\sigma_\sharp\gamma)}.\]
\end{lem}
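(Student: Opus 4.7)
The statement is essentially a bookkeeping result: it asserts that the isomorphism $\Phi$ decomposes the period vector $\Omega_\gamma$ into its $\sigma^\ast$-invariant and $\sigma^\ast$-anti-invariant parts, and that each part can equally well be obtained by integrating $\omega^+$ over the formal chain $\frac{1}{2}(\gamma+\sigma_\sharp\gamma)$ or $\omega^-$ over $\frac{1}{2}(\gamma-\sigma_\sharp\gamma)$. My plan is to verify the two equalities in turn, reducing both to linearity of the integral together with the $\sigma^\ast$-eigenspace structure of the basis $\{\omega_i^\pm\}$ set up just before equation \eqref{eqn:omegatransformed}.

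The first equality $\Phi(\Omega_\gamma) = \Omega^+_\gamma \oplus \Omega^-_\gamma$ I would verify by a componentwise computation: by linearity of the integral, the $i$-th plus-entry $\tfrac{1}{2}\bigl(\int_\gamma \omega_i + \int_\gamma \omega_{g_\sigma+i}\bigr)$ equals $\int_\gamma \omega_i^+$, and the same for the minus-entries, with the last $n-1$ components passing through unchanged because $\omega_{g_\sigma+j}^- = \omega_{2g_\sigma+j}$. This is a direct unpacking of the definitions \eqref{eqn:omegatransformed} and of $\Phi$.

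For the second equality the key input is that the $\omega_i^\pm$ are eigenforms of $\sigma^\ast$. From $\sigma^\ast \omega_i = \omega_{g_\sigma+i}$ for $i=1,\dots,g_\sigma$, the involution property $\sigma^2=\mathrm{id}$ (which gives $\sigma^\ast\omega_{g_\sigma+i}=\omega_i$), and $\sigma^\ast\omega_{2g_\sigma+j} = -\omega_{2g_\sigma+j}$, I would deduce $\sigma^\ast \omega_i^+ = \omega_i^+$ and $\sigma^\ast \omega_k^- = -\omega_k^-$ for all relevant $k$. Combining this with the standard pullback identity $\int_{\sigma_\sharp \gamma}\omega = \int_\gamma \sigma^\ast \omega$ yields $\Omega^+_{\sigma_\sharp\gamma} = \Omega^+_\gamma$ and $\Omega^-_{\sigma_\sharp\gamma} = -\Omega^-_\gamma$. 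Extending $\Omega^\pm$ $\R$-linearly to formal rational $1$-chains, this immediately produces $\Omega^+_{\frac{1}{2}(\gamma+\sigma_\sharp\gamma)} = \Omega^+_\gamma$ and $\Omega^-_{\frac{1}{2}(\gamma-\sigma_\sharp\gamma)} = \Omega^-_\gamma$, from which the claimed direct-sum identity follows.

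No step presents a real obstacle; the whole argument is a few lines of manipulation. The only point worth flagging is that $\frac{1}{2}(\gamma\pm\sigma_\sharp\gamma)$ is a formal rational $1$-chain rather than a genuine path, and $\Omega^\pm$ must be read as extended linearly to such chains — a convention implicit in the statement of the lemma and harmless for all subsequent period computations.
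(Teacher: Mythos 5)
Your proposal is correct and follows essentially the same route as the paper: the first equality by componentwise unpacking of $\Phi$ and the definitions in \eqref{eqn:omegatransformed}, and the second by noting that the $\omega_k^\pm$ are $\pm1$-eigenforms of $\sigma^\ast$ and applying the pullback identity $\int_{\sigma_\sharp\gamma}\omega=\int_\gamma\sigma^\ast\omega$. Your explicit remark that $\frac{1}{2}(\gamma\pm\sigma_\sharp\gamma)$ is a formal rational chain is a reasonable clarification that the paper leaves implicit.
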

\begin{proof}
	The first equality follows immediately from the definition of $\Phi$ and the differential forms in \eqref{eqn:omegatransformed}:
	\begin{align*}
	\Phi(\Omega_\gamma) = \Phi\begin{pmatrix}
	\int_\gamma \omega_1 \\ \vdots \\ \int_\gamma \omega_g
	\end{pmatrix} 
&	= \begin{pmatrix}
	\frac{1}{2}(\int_\gamma \omega_1+\omega_{g_\sigma+1}) \\ \vdots \\ 
	\frac{1}{2}(\int_\gamma \omega_{g_\sigma}+\omega_{2g_\sigma}) 
		\end{pmatrix} 
	\oplus \begin{pmatrix}
	\frac{1}{2}(\int_\gamma \omega_1-\omega_{g_\sigma+1}) \\ \vdots \\ 
	\frac{1}{2}(\int_\gamma \omega_{g_\sigma}-\omega_{2g_\sigma}) \\
	\int_\gamma \omega_{2g_\sigma+1}\\
	\int_\gamma \omega_{2g_\sigma+n-1} 
	\end{pmatrix}
\\&	= \begin{pmatrix}
	\int_\gamma \omega^+_1 \\ \vdots \\ 
	\int_\gamma \omega^+_{g_\sigma}
	\end{pmatrix} 
	\oplus \begin{pmatrix}
	\int_\gamma \omega^-_1 \\ \vdots \\ 
	\int_\gamma \omega^-_{g_\sigma} \\
	\int_\gamma \omega^-_{g_\sigma+1}\\
	\int_\gamma \omega^-_{g_\sigma+n-1} 
	\end{pmatrix}.
	\end{align*}
Since $\omega_k^+=\sigma^\ast\omega_k^+$ for $k=1,\dots, g_\sigma$ and $\omega_k^-=-\sigma^\ast\omega_k^-$ for $k=1,\dots,g_\sigma+n-1$ one has  
\[
\int_\gamma \omega_k^+ = \frac{1}{2} \bigg(\int_\gamma \omega_k^++\sigma^\ast \omega_k^+\bigg) = \int_{\frac{1}{2}(\gamma+\sigma_\sharp\gamma)}\omega_k^+ 
\]
as well as
\[
\int_\gamma \omega_k^- = \frac{1}{2} \bigg(\int_\gamma \omega_k^--\sigma^\ast \omega_k^-\bigg) = \int_{\frac{1}{2}(\gamma-\sigma_\sharp\gamma)}\omega_k^- 
\]
which implies the second equality.
\end{proof}
\begin{cor}\label{cor:lattice}
	The generators of $\Phi^{-1}(\Lambda_+\oplus \Lambda_-)$ span a basis of $\C^g$ over $\R$, the generators of $\Lambda_+$ span a basis of $\C^{g_\sigma}$ over $\R$ and the generators of $\Lambda_-$ span a basis of $\C^{g_\sigma+n-1}$ over $\R$. 	
\end{cor}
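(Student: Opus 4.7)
The plan is to reduce all three claims to the classical fact that the period lattice $\Lambda$ has full real rank $2g$ in $\C^g$ (Riemann's bilinear relations for the normalized basis $\omega_1,\ldots,\omega_g$, see \cite[Section VIII.2]{Mir}), and then to transport this through the $\C$-linear isomorphism $\Phi$. Since $\Phi$ sends an $\R$-basis of $\C^g$ to an $\R$-basis of $\C^{g_\sigma}\oplus\C^{g_\sigma+n-1}$, the heart of the argument is to express the generators of $\Lambda_+\oplus\Lambda_-$ in terms of the $2g$ images $\{\Phi(\Omega_\gamma)\}$ by a real-invertible change of basis.

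To set this up I would first observe that the identities $\sigma^\ast\omega_k^+=\omega_k^+$ and $\sigma^\ast\omega_k^-=-\omega_k^-$ yield $\Omega^+_{\sigma_\sharp\gamma}=\Omega^+_\gamma$ and $\Omega^-_{\sigma_\sharp\gamma}=-\Omega^-_\gamma$ for every $1$-cycle $\gamma$; combined with Lemma \ref{lem:cd-cycles} this forces $\Omega^+_{C_j}=\Omega^+_{D_j}=0$. Lemma \ref{lem:iso} then produces
\begin{align*}
\Phi(\Omega_{A_i})+\Phi(\Omega_{\sigma_\sharp A_i}) &= \Omega^+_{A_i+\sigma_\sharp A_i}\oplus 0,\\
\Phi(\Omega_{A_i})-\Phi(\Omega_{\sigma_\sharp A_i}) &= 0\oplus\Omega^-_{A_i-\sigma_\sharp A_i},\\
\Phi(\Omega_{C_j}) &= 0\oplus\Omega^-_{C_j},
\end{align*}
together with the analogous formulas for the $B_i,\sigma_\sharp B_i$ pairs and for $D_j$. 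The matrix taking $\{\Phi(\Omega_\cdot)\}$ to the generators of $\Lambda_+\oplus\Lambda_-$ is then block-diagonal: $2g_\sigma$ copies of $\left(\begin{smallmatrix}1&1\\1&-1\end{smallmatrix}\right)$ (one per $A_i,\sigma_\sharp A_i$ pair and per $B_i,\sigma_\sharp B_i$ pair) together with $2(n-1)$ trivial $1\times 1$ blocks coming from the $C_j$, $D_j$ generators. Its determinant equals $(-2)^{2g_\sigma}\neq 0$, so this change of basis is $\R$-invertible and the generators of $\Lambda_+\oplus\Lambda_-$ form an $\R$-basis of $\C^{g_\sigma}\oplus\C^{g_\sigma+n-1}$.

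The three conclusions should then drop out. Applying the $\C$-linear isomorphism $\Phi^{-1}$ gives the first claim directly. For the remaining two, the direct-sum structure forces the $2g_\sigma$ generators of $\Lambda_+$, all lying in $\C^{g_\sigma}\oplus 0$, to be $\R$-linearly independent in $\C^{g_\sigma}$; since $\dim_\R\C^{g_\sigma}=2g_\sigma$, they form an $\R$-basis. The parallel count yields an $\R$-basis of $\C^{g_\sigma+n-1}$ from the $2(g_\sigma+n-1)$ generators of $\Lambda_-$. The only substantive step is the $\sigma$-equivariance bookkeeping leading to the change-of-basis identities above; everything else reduces to explicit linear algebra and a dimension count.
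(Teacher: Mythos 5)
Your proposal is correct and follows essentially the same route as the paper: start from the fact that the $2g$ generators of $\Lambda$ span $\C^g$ over $\R$ (the complex-torus property of $\Jac(X)$), pass to the $\pm$ combinations by an invertible real change of basis, and transport through the isomorphism $\Phi$ using $\Phi(\Omega_{A_i\pm\sigma_\sharp A_i})=\Omega^+_{A_i+\sigma_\sharp A_i}\oplus 0$ resp.\ $0\oplus\Omega^-_{A_i-\sigma_\sharp A_i}$ and $\Phi(\Omega_{C_j})=0\oplus\Omega^-_{C_j}$. The only difference is that you make the change-of-basis matrix and its determinant $(-2)^{2g_\sigma}$ explicit where the paper simply invokes ``basis transformation,'' which is a harmless elaboration.
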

\begin{proof}
		Since $\Jac(X)=\C^g/\Lambda$ is a complex torus, the generators of $\Lambda$ given in \eqref{eqn:lattice} are a basis of $\C^g$ over $\R$, compare for example \cite[Section II.2]{LB}. Basis transformation yields that $\Omega_{A_i+\sigma_\sharp A_i}$, $\Omega_{A_i-\sigma_\sharp A_i}$,$\Omega_{B_i+\sigma_\sharp B_i}$, $\Omega_{B_i-\sigma_\sharp B_i}$ , $\Omega_{C_j}$ and $\Omega_{D_j}$ are also a basis of $\C^g$ over $\R$. Since $\Phi$ is a vector space ismorphism with  
		\begin{align*}
		\Phi(\Omega_{A_i+\sigma_\sharp A_i}) = \Omega^+_{A_i+\sigma_\sharp A_i} \oplus 0, &&
		\Phi(\Omega_{A_i-\sigma_\sharp A_i}) = 0 \oplus  \Omega^-_{A_i-\sigma_\sharp A_i}, \\
		\Phi(\Omega_{B_i+\sigma_\sharp B_i}) = \Omega^+_{B_i+\sigma_\sharp B_i} \oplus 0, &&
		\Phi(\Omega_{B_i-\sigma_\sharp B_i}) = 0\oplus \Omega^-_{B_i-\sigma_\sharp B_i}, \\
		\Phi(\Omega_{C_j})= 0 \oplus \Omega^-_{C_j}, && \Phi(\Omega_{D_j})= 0 \oplus \Omega^-_{D_j},
		\end{align*}
		the generators of $\Phi^{-1}(\Lambda_+\oplus \Lambda_-)$ yield a basis of $\C^g$ over $\R$. Since $\Phi$ is an isomorphism, the generators of $\Lambda_+$ are a basis of $\C^{g_\sigma}$ and the generators of $\Lambda_-$ of $\C^{g_\sigma+n-1}$ over $\R$.
	\end{proof}
In the sequel, we will apply $\Phi$ and $\Phi^{-1}$ to lattices. Note that, for shortage of notation, we abuse the notation in the sense that $\Phi(\Lambda)$ denotes the lattice in $\C^{g_\sigma}\oplus \C^{g_\sigma+n-1}$ spanned by the image of the generators of $\Lambda$ under $\Phi$ and analogously for $\Phi^{-1}$ applied to lattices. 
	 		\begin{lem} \label{lem:lattice_translate}
	 			\begin{enumerate}[(a)]
	 				\item $\Lambda_+\oplus\, 0= \Phi(\Lambda)\cap (\C^{g_\sigma}\oplus 0)$, $0\,\oplus \Lambda_-= \Phi(\Lambda)\cap (0\oplus \C^{g_\sigma+n-1})$.
\item
	 			\begin{equation}\label{eqn:lambda_decomposed}
	 		     \Phi(\Lambda) = (\Lambda_+ \oplus \Lambda_-) + M
	 			\end{equation} with
	 			\begin{multline*} 
	 			M:=\Big\{\sum\limits_{i=1}^{g_\sigma} \Big(\frac{a_i}{2}\Omega^+_{A_i+\sigma_\sharp  A_i}+ \frac{b_i}{2}\Omega^+_{ B_i+\sigma_\sharp  B_i}\Big) \oplus \\ \oplus \sum\limits_{i=1}^{g_\sigma} \Big( \frac{a_i}{2}\Omega^-_{A_i-\sigma_\sharp  A_i} + \frac{b_i}{2}\Omega^-_{ B_i-\sigma_\sharp  B_i}\Big) \, \Big| \, a_i, b_i\in\{0,1\}\Big\}.
	 			\end{multline*}
	 			\item $M \cap (\Lambda_+ \oplus \Lambda_-)=\{0\}$
	 			\end{enumerate}
	 		\end{lem}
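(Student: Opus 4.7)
The plan is to base everything on explicitly computing $\Phi$ on each generator of $\Lambda$ using Lemma~\ref{lem:iso}, combined with the transformation properties of the cycles under $\sigma_\sharp$. From the identity $\Phi(\Omega_\gamma) = \Omega^+_{\frac{1}{2}(\gamma+\sigma_\sharp\gamma)} \oplus \Omega^-_{\frac{1}{2}(\gamma-\sigma_\sharp\gamma)}$ together with Lemma~\ref{lem:cd-cycles}, I would first record
\[
\Phi(\Omega_{A_i}) = \tfrac{1}{2}\Omega^+_{A_i+\sigma_\sharp A_i} \oplus \tfrac{1}{2}\Omega^-_{A_i-\sigma_\sharp A_i}, \quad \Phi(\Omega_{\sigma_\sharp A_i}) = \tfrac{1}{2}\Omega^+_{A_i+\sigma_\sharp A_i} \oplus \bigl(-\tfrac{1}{2}\bigr)\Omega^-_{A_i-\sigma_\sharp A_i},
\]
analogous expressions for $B_i$ and $\sigma_\sharp B_i$, and
\[
\Phi(\Omega_{C_j}) = 0 \oplus \Omega^-_{C_j}, \qquad \Phi(\Omega_{D_j}) = 0 \oplus \Omega^-_{D_j},
\]
where the vanishing $\Lambda_+$-components in the last line reflect $\sigma_\sharp C_j = -C_j$ and $\sigma_\sharp D_j = -D_j$. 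With this table at hand, each of (a)--(c) becomes linear-algebraic bookkeeping.

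For part (a), the inclusion $\supseteq$ is immediate since $\Omega^+_{A_i+\sigma_\sharp A_i} \oplus 0 = \Phi(\Omega_{A_i}) + \Phi(\Omega_{\sigma_\sharp A_i}) \in \Phi(\Lambda)$, and the $B$-analogue as well as the corresponding identities for $\Lambda_-$ are handled the same way. For the inclusion $\subseteq$, I would take an arbitrary integer combination of the generators of $\Phi(\Lambda)$, demand that its $\C^{g_\sigma+n-1}$-component vanish, expand that component explicitly, and invoke the $\R$-linear independence of $\Omega^-_{A_i-\sigma_\sharp A_i}, \Omega^-_{B_i-\sigma_\sharp B_i}, \Omega^-_{C_j}, \Omega^-_{D_j}$ guaranteed by Corollary~\ref{cor:lattice}. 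This forces the coefficients in front of $\Phi(\Omega_{A_i})$ and $\Phi(\Omega_{\sigma_\sharp A_i})$ to coincide and those in front of $\Phi(\Omega_{C_j}), \Phi(\Omega_{D_j})$ to vanish, after which the first component automatically lies in $\Lambda_+$. The second identity is symmetric.

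For part (b), the inclusion $\supseteq$ follows from (a) together with the observation that $\Phi(\Omega_{A_i}), \Phi(\Omega_{B_i}) \in M$, so $M \subset \Phi(\Lambda)$. For the reverse inclusion I would decompose each generator of $\Phi(\Lambda)$ individually: $\Phi(\Omega_{A_i}), \Phi(\Omega_{B_i})$ already lie in $M$; $\Phi(\Omega_{\sigma_\sharp A_i}) = \Phi(\Omega_{A_i}) - (0 \oplus \Omega^-_{A_i-\sigma_\sharp A_i})$ is an element of $M$ minus an element of $\Lambda_-$, and symmetrically for $\Phi(\Omega_{\sigma_\sharp B_i})$; finally $\Phi(\Omega_{C_j}), \Phi(\Omega_{D_j}) \in 0 \oplus \Lambda_-$.

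Part (c) is the cleanest: any $m \in M \cap (\Lambda_+ \oplus \Lambda_-)$ has first component $\sum_i \bigl(\tfrac{a_i}{2}\Omega^+_{A_i+\sigma_\sharp A_i} + \tfrac{b_i}{2}\Omega^+_{B_i+\sigma_\sharp B_i}\bigr)$ with $a_i, b_i \in \{0,1\}$, and this must lie in $\Lambda_+$. Since Corollary~\ref{cor:lattice} provides $\R$-linear independence, hence $\mathbb{Q}$-linear independence, of the generators of $\Lambda_+$, the half-integer coefficients must be integers, forcing $a_i = b_i = 0$ and $m = 0$. There is no deep conceptual obstacle here; the only delicate point is invoking Corollary~\ref{cor:lattice} at exactly the right moments, so that integer relations among the generators of $\Lambda_\pm$ really do force individual coefficients to vanish rather than merely constraining their sums.
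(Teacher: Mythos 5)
Your proposal is correct, and parts (a) and (c) follow essentially the same route as the paper: the hard inclusion in (a) via the $\R$-linear independence from Corollary~\ref{cor:lattice} forcing $c_j=d_j=0$, $a_i=a_{\sigma,i}$, $b_i=b_{\sigma,i}$, and (c) via uniqueness of coefficients in the basis of $\Lambda_+\oplus\Lambda_-$. For part (b) you take a genuinely different and, I would say, cleaner route. The paper sandwiches $\Lambda_+\oplus\Lambda_-$ between $\Phi(2\Lambda)$ and $\Phi(\Lambda)$, passes to the quotient $\Phi(\Lambda)/(\Lambda_+\oplus\Lambda_-)$, identifies coset representatives with $\sum_i a_i\Phi(\Omega_{A_i})+b_i\Phi(\Omega_{B_i})$, $a_i,b_i\in\{0,1\}$, and counts the index as $2^{2g_\sigma}$; you instead decompose each generator of $\Phi(\Lambda)$ explicitly as an element of $M$ plus an element of $\Lambda_+\oplus\Lambda_-$, which avoids the coset-counting entirely and makes the role of the table of values $\Phi(\Omega_{A_i})$, $\Phi(\Omega_{\sigma_\sharp A_i})$, $\Phi(\Omega_{C_j})$ transparent. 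The one step you should make explicit is the passage from \emph{generators} to the \emph{whole lattice}: knowing that each generator lies in $(\Lambda_+\oplus\Lambda_-)+M$ only gives $\Phi(\Lambda)\subseteq(\Lambda_+\oplus\Lambda_-)+M$ once you observe that the right-hand side is closed under addition and negation, which holds because $2m\in\Lambda_+\oplus\Lambda_-$ for every $m\in M$ (doubling the coefficients $a_i/2$, $b_i/2$ produces integers), so that $M$ represents a subgroup isomorphic to $(\Z/2\Z)^{2g_\sigma}$ of the quotient. With that one-line remark added, your argument is complete and fully rigorous.
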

	 		\begin{proof}
	 		 Obviously, $\Lambda_+\oplus 0$ is contained in $\Phi(\Lambda)\cap (\C^{g_\sigma}\oplus 0)$. To see that $\Phi(\Lambda)\cap (\C^{g_\sigma}\oplus 0)$ is also a subset of $\Lambda_+\oplus 0$, note that for every $\gamma\in \Lambda$ there exists coefficients $a_i,a_{\sigma,i},b_i,b_{\sigma_i},c_j,d_j\in \Z$ such that 
	 				\[
	 				\gamma = \sum\limits_{i=1}^{g_\sigma} a_i \Omega_{A_i}+a_{\sigma,i}\Omega_{\sigma_\sharp A_i}+ b_i \Omega_{B_i}+b_{\sigma,i}\Omega_{\sigma_\sharp B_i}+c_j \Omega_{C_j}+d_j\Omega_{D_j}.
	 				\]
	 		The generators of $\Lambda_+$ and $\Lambda_-$ are linearly independent, compare Corollary \ref{cor:lattice}.	
			So the second equality in Lemma \ref{lem:iso} shows that $\Phi(\gamma)\in\C^{g_\sigma}\oplus 0$ can only hold if $c_j=d_j=0$, $a_i=a_{\sigma_i}$ and $b_i=b_{\sigma_i}$. Then for such $\gamma$ it is
				\begin{align*}
				\Phi(\gamma) & = 2a_i\Omega^+_{\frac{1}{2}(A_i+\sigma_\sharp A_i)} + 2b_i\Omega^+_{\frac{1}{2}(B_i+\sigma_\sharp B_i)} \oplus 0 \\ & = a_i\Omega^+_{A_i+\sigma_\sharp A_i} + b_i\Omega^+_{B_i+\sigma_\sharp B_i} \oplus 0 \in \Lambda_+ \oplus 0.
				\end{align*}
				The equality $0\oplus\Lambda_-=\Phi(\Lambda)\cap(0\oplus \C^{g_\sigma+n-1})$ follows in the same manner. So the first part holds.  \\
 			    To get insight into the second part, we will show that for the set of cosets one has
	 			\begin{align*}
	 			\Phi(\Lambda)/\Lambda_+\oplus \Lambda_- & = \{\lambda + (\Lambda_+\oplus \Lambda_-)\, | \,\Phi(\lambda)\in \Lambda \} \\ & = \{\lambda + (\Lambda_+\oplus \Lambda_-)\, | \,\Phi(\lambda)\in M \}.
	 			\end{align*}	
				The lattice $\Lambda$ is a finitely generated abelian group, so also $\Phi(\Lambda)$, $\Lambda_+$ and $\Lambda_-$ are finitely generated abelian groups and   
	 			$\Phi(2\Lambda)\subset\Lambda_+\oplus \Lambda_-\subset \Phi(\Lambda)$,
	 			where the second inclusion is obvious and the first inclusion holds since any element $2\gamma$ of $2\Lambda$ be decomposed as 
				$2\gamma=2\bigr(\frac{1}{2}(\gamma+\sigma_\sharp \gamma)+\frac{1}{2}(\gamma-\sigma_\sharp \gamma)\bigr)$. 
	 			Therefore,
	 			$\Phi(\Lambda)/(\Lambda_+\oplus \Lambda_-) \subset \Phi(\Lambda)/\Phi(2\Lambda)$ and the set of the $(\Lambda:2\Lambda)=2^{2g}$ elements contained in $\Phi(\Lambda)/\Phi(2\Lambda)$ is the maximal set of points which are not contained in $\Lambda_+\oplus \Lambda_-$ but in $\Phi(\Lambda)$. 
	 			One has $\Phi(\Omega_{C_j}),\Phi(\Omega_{D_j})\in \Lambda_+\oplus \Lambda_-\subset \Phi(\Lambda)$. 
	 			Therefore, all points in $M$ are linear combinations of $
	 			\Phi(\Omega_{A_i}),\Phi(\Omega_{\sigma_\sharp A_i}), \Phi(\Omega_{B_i})$ and $\Phi(\Omega_{\sigma_\sharp B_i})$ 
	 			 with coefficients in $\{0,1\}$.
	 		Since $\Omega_{A_i}=\Omega_{A_i+\sigma_\sharp A_i} - \Omega_{\sigma_\sharp A_i}$, one has that $[\Phi(\Omega_{A_i})]=[\Phi(\Omega_{\sigma_\sharp A_i})]$ and $[\Phi(\Omega_{B_i})]=[\Phi(\Omega_{\sigma_\sharp B_i})]$ in $\Phi(\Lambda)/(\Lambda_+\oplus \Lambda_-)$ 
	 			and thus
	 			\begin{equation} \label{eqn:subset_n}
	 				M\subseteq \Big\{\sum_{i=1}^{g_\sigma}a_i\Phi(\Omega_{A_i})+ b_i \Phi(\Omega_{B_i})\,\big|\, a_i,b_i\in \{0,1\}\Big\}.
	  			\end{equation}
	 			Furthermore, $\Phi(\Omega_{A_i})=\frac{1}{2}(\Phi(\Omega_{A_i+\sigma_\sharp A_i})+\Phi(\Omega_{A_i-\sigma_\sharp A_i}))$. 
	 			Due to Corollary \ref{cor:lattice}, these representations of $\Phi(\Omega_{A_i})$ as a vector in $\C^g$ in the basis given by the generators of $\Lambda_+\oplus \Lambda_-$ is unique, i.e.  
$\Phi(\Omega_{A_i})\not \in \Lambda_+\oplus \Lambda_-$ and by the same means $\Phi(\Omega_{\sigma_\sharp A_i}), \Phi(\Omega_{B_i}), \Phi(\Omega_{\sigma B_i})\not \in \Lambda_+\oplus \Lambda_-$.
The linear independence of the generators of $\Lambda$ then yields equality in \eqref{eqn:subset_n}.
	Hence $(\Phi(\Lambda):\Lambda_+\oplus \Lambda_-)=2^{2g_\sigma}$ and so $\Lambda$ can be seen as finitely many copies of $\Lambda_+\oplus \Lambda_-$ translated by the points in $M$. 
    The linear independence of the generators of $\Lambda_+$ and $\Lambda_-$ and the definition of $M$ imply (c).
\end{proof}
 	\begin{rem}
 	In \cite{Mu} it was shown that $\Jac(X_\sigma)\simeq \C^{g_\sigma}/\Lambda_+$ and that the Prym variety $P(X,\sigma)$ can be identified with $\C^{g_\sigma+n-1}/\Lambda_-$. Furthermore, it was also shown that the direct sum $\Jac(X_\sigma)\oplus P(X,\sigma)$ is only isogenous to $\Jac(X)$, but that the quotient of this direct sum divided by a finite set of points is isomorphic to $\Jac(X)$.  	
 	The explicit calculations in Lemmata \ref{lem:iso} and \ref{lem:lattice_translate} are mirroring this connection and the finite set of points which are divided out of the direct sum in \cite[Section 2, Data II]{Mu} are exactly the points in $M$. 
 	\end{rem}	 	
\section{The fixed points of $\sigma$ and the linear equivalence}
\begin{thm}
Let $X$ be a Riemann surface of genus $g$, $K$ a canonical divisor on $X$, $\sigma: X \to X$ a holomorpic involution and $P_1,P_2\in X$ fixed points of $\sigma$. 
Then there exists a divisor $D$ of degree $g$ on $X$ which solves
\begin{equation}\label{eqn:divisor_condition_fixed_points}
D+\sigma(D) \simeq  K+P_1 + P_2
\end{equation} 
if and only if $\sigma$ has exactly the two fixed points $P_1$ and $P_2$.
\end{thm}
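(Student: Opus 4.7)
\emph{Sufficiency.} Suppose $P_1,P_2$ are the only fixed points of $\sigma$, so that $n=1$, $R_\pi=P_1+P_2$ and $g=2g_\sigma$. Lemma~\ref{lem:divisor}(b) gives a divisor $\widetilde K$ of degree $g-2$ with $K\simeq \widetilde K+\sigma(\widetilde K)+P_1+P_2$. Then $D:=\widetilde K+P_1+P_2$ has degree $g$ and satisfies $D+\sigma(D)=\widetilde K+\sigma(\widetilde K)+2P_1+2P_2\simeq K+P_1+P_2$ at once.

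\emph{Necessity: reduction.} Assume a divisor $D$ of degree $g$ solves the equation and, for contradiction, that $n\ge 2$. Relabel the branchpoints so that $(b_n^1,b_n^2)=(P_1,P_2)$, and take base point $P_0:=P_1$ for the Abel map $\mathrm{Ab}\colon\mathrm{Div}^0(X)\to\C^g/\Lambda$. Since $\sigma P_0=P_0$, the linear equivalence translates to
\[
(1+\sigma_\ast)\,\mathrm{Ab}(D-gP_0)\,\equiv\,\mathrm{Ab}(K+P_1+P_2-2gP_0)\pmod\Lambda.
\]
By Lemma~\ref{lem:iso}, after applying $\Phi$ the left hand side lies in $\C^{g_\sigma}\oplus\{0\}$. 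Combined with Lemma~\ref{lem:lattice_translate}(b), which identifies the second components of $\Phi(\Lambda)$ with $\Lambda_-+M^-$ (where $M^-$ is the negative part of $M$), projecting on the second factor shows that existence of $D$ is equivalent to
\[
w^-\in\Lambda_-+M^-,\qquad w^-:=\bigl[\Phi\bigl(\mathrm{Ab}(K+P_1+P_2-2gP_0)\bigr)\bigr]^-.
\]

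\emph{Computing $w^-$.} Using $K\simeq \widetilde K+\sigma(\widetilde K)+R_\pi$ from Lemma~\ref{lem:divisor}(b), the $\widetilde K+\sigma(\widetilde K)$ piece contributes $0$ to $w^-$: for each point $p$ of $\widetilde K$ the paths from $P_0$ to $p$ and to $\sigma p$ produce cancelling $\omega^-$-integrals. For the remainder $\sum_{j<n}(b_j^1+b_j^2-2P_0)+2(P_2-P_0)$, for each $j<n$ I pick a path $\beta_j^1$ from $P_0$ to $b_j^1$ and a lift $\alpha_j$ of $s_j$ from $b_j^1$ to $b_j^2$, so that $C_j=\alpha_j-\sigma_\sharp\alpha_j$; antisymmetry of $\omega^-$ then yields $2\Omega^-_{\beta_j^1}=\Omega^-_{\beta_j^1-\sigma_\sharp\beta_j^1}\in\Lambda_-$ and $\Omega^-_{\alpha_j}=\tfrac12\Omega^-_{C_j}$. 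The term $2(P_2-P_0)$ is handled analogously using a lift of $s_n$, whose associated cycle satisfies $C_n=-\sum_{j<n}C_j$ and thus already lies in $\Lambda_-$. Collecting,
\[
w^-\equiv\sum_{j=1}^{n-1}\tfrac12\,\Omega^-_{C_j}\pmod{\Lambda_-}.
\]

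\emph{Contradiction.} The image of $M^-$ in the $2$-torsion group $\tfrac12\Lambda_-/\Lambda_-$ is the $\mathbb{F}_2$-span of $[\tfrac12\Omega^-_{A_i-\sigma_\sharp A_i}]$ and $[\tfrac12\Omega^-_{B_i-\sigma_\sharp B_i}]$; by Corollary~\ref{cor:lattice} these are $\mathbb{F}_2$-independent from the classes $[\tfrac12\Omega^-_{C_j}]$ inside $\tfrac12\Lambda_-/\Lambda_-$. Therefore $\sum_{j=1}^{n-1}[\tfrac12\Omega^-_{C_j}]$ is nonzero and cannot lie in $M^-\bmod\Lambda_-$ whenever $n\ge 2$, contradicting the necessary condition and forcing $n=1$. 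The main obstacle is the careful path-bookkeeping needed to pin down the $C_j$-coefficients in $w^-$; once this is done, the lattice framework of Section~3 makes the obstruction immediate.
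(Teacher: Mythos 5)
Your proposal is correct and follows essentially the same route as the paper: both use Lemma \ref{lem:divisor}(b) to absorb $\widetilde K+\sigma(\widetilde K)$, pass through the decomposition $\Phi$ of Lemmata \ref{lem:iso} and \ref{lem:lattice_translate} to reduce everything to the minus-component, identify the obstruction as $\tfrac12\sum_{j=1}^{n-1}\Omega^-_{C_j}$, and rule it out of $\Lambda_-+M^-$ by the linear independence of the lattice generators from Corollary \ref{cor:lattice}. The only differences are organizational (you phrase the path bookkeeping via a base-pointed Abel map and make the $2$-torsion argument in $\tfrac12\Lambda_-/\Lambda_-$ explicit, where the paper instead reduces to a degree-zero divisor $\widetilde D$ and integrates over explicitly constructed paths), and your sufficiency direction coincides with the paper's verbatim.
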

\begin{proof}
Assume that $\sigma$ has more fixed points then $P_1$ and $P_2$, i.e. $n>1$, and that \eqref{eqn:divisor_condition_fixed_points} holds. Due to Lemma \ref{lem:divisor} there exists a divisor $\widetilde K$ of degree $2g_\sigma-2$ on $X$ such that $K= \widetilde K+\sigma(\widetilde K)+R_\pi$ and hence equation \eqref{eqn:divisor_condition_fixed_points} yields 
$ D-\widetilde K + \sigma(D-\widetilde K) \simeq  R_\pi+P_1+P_2$.
We sort the $2n$ ramification points in $r_\pi$ into pairs as it was done in the construction of the $C$-cycles and denote the two fixed points on $C_n$ as $P_1$ and $P_2$. Then equation \eqref{eqn:divisor_condition_fixed_points} reads as
$
D-\widetilde K +\sigma(D-\widetilde K) \simeq \sum_{j=1}^{n-1}(b_j^1+b_j^2) + 2P_1 + 2P_2
$.
With $\widetilde D:= D-\widetilde K+\sum_{j=1}^{n-1}b_j^1 -P_1-P_2$
this is equivalent to 
\begin{equation}\label{eqn:divisor_decomposition_2}
\widetilde D+\sigma(\widetilde D)+\sum\limits_{j=1}^{n-1} (b_j^1-b_j^2)\simeq 0.
\end{equation}
Furthermore, $\deg(\widetilde D+\sigma(\widetilde D)+\sum_{j=1}^{n-1} (b_j^1-b_j^2))=0$ and $\deg(\sum_{j=1}^{n-1}(b_j^1-b_j^2))=0$. Since $\deg$ acts linear on divisors and is invariant under $\sigma$, this yields $\deg(\widetilde D)=0$. So counted without multiplicity, there are as many points with positive sign as with negative sign in $\widetilde D$, i.e. $\widetilde D=\sum_{k=1}^\ell (p_k^1-p_k^2)$. Let $\gamma_k:[0,1]\to X$ be a path with $\gamma_k(0)=p_k^1$ and $\gamma_k(1)=p_k^2$. Then $\sigma_\sharp \gamma_k:[0,1]\to X$ is a path with $\sigma(\gamma_k(0))=\sigma(p_k^1)$ and $\sigma(\gamma_k(1))=\sigma(p_k^2)$. Then   define $\gamma_{\widetilde D}:=\sum_{k=1}^\ell \gamma_k$ and  $\sigma_\sharp\gamma_{\widetilde D}:=\sum_{k=1}^\ell \sigma_\sharp \gamma_k$.   
Analogously, let $\gamma_{R,j}$ be defined as the paths $\gamma_{R,j}:[0,1]\to X$ such that $\gamma_{R,j}(0)=b_j^1$ and $\gamma_{R,j}(1)=b_j^2$ for $j=1,\dots, n-1$. Then, due to the construction of the $C$-cycles, one has $\gamma_{R,j}-\sigma_\sharp \gamma_{R,j}=C_j$. We define $\gamma_R:= \sum_{j=1}^{n-1}\gamma_{R,j}$. 
 Set $\gamma:=\gamma_{\widetilde D}+\sigma_\sharp\gamma_{\widetilde D}+\gamma_R$ and let $\omega_1,\dots, \omega_g$ be the canonical basis of $H^0(X,\Omega)$ normalized with respect to $A$- and $C$-cycles as in \eqref{eqn:hol_norm}. 
Again we use the identification $\Jac(X)=\C^{{g}}/\Lambda$ via the Abel map $\mathrm{Ab}$ with the basis of holomorphic $1$-forms on $X$ normalized as in \eqref{eqn:hol_norm}. 
Due to \eqref{eqn:divisor_decomposition_2}, the linear equivalence can also be expressed as
\begin{equation*} 
\mathrm{Ab}\Big(\widetilde D+\sigma(\widetilde D)+\sum\limits_{i=1}^{n-1} (b_i^1-b_i^2)\Big)=0\mod \Lambda.
\end{equation*} 
This equation can only hold if $\Omega_\gamma\in \Lambda$. 
Due to Lemma \ref{lem:iso} we can split $\Omega_\gamma\in\C^{g}$ uniquely by considering $\Phi(\Omega_\gamma)=\Omega^+_\gamma \oplus \Omega^-_\gamma$
Due to the decomposition of $\Lambda$ in \eqref{eqn:lambda_decomposed}, $\Omega_\gamma\in \Lambda$ is equivalent to $\Omega^+_\gamma \oplus \Omega^-_\gamma\in (\Lambda_+\oplus \Lambda_-)+M$ as defined in Lemma \ref{lem:lattice_translate}. So we want to show that $\Omega^+_\gamma \oplus \Omega^-_\gamma$ is not contained in any of the translated copies of $\Lambda_+\oplus\Lambda_-$ if $n>1$.
Since it will turn out that it is $\Omega^-_\gamma$ which leads to this assertion, we determine the explicit form of $\Omega^-_\gamma$. For every $\omega^-\in H^0(X,\Omega)$ such that $\sigma^\ast \omega^-=-\omega^-$ one has 
\[
\int\limits_{\gamma_{\widetilde D}+\sigma_\sharp\gamma_{\widetilde D}} 
\omega^- 
= 
\int\limits_{\gamma_{\widetilde D}}\omega^- + \int\limits_{\gamma_{\widetilde D}}  \sigma^\ast\omega^- 
= 
\int\limits_{\gamma_{\widetilde D}}\omega^- - \int\limits_{\gamma_{\widetilde D}} \omega^- =0 
\] 
as well as
\[
2\int\limits_{\gamma_{R,i}} \omega^- = \int\limits_{\gamma_{R,i}} \omega^- - \int\limits_{\gamma_{R,i}} \sigma^\ast\omega^- = 
\int\limits_{\gamma_{R,i}} \omega^- - \int\limits_{\sigma_\sharp \gamma_{R,i}} \omega^-
=\int\limits_{\gamma_{R,i}-\sigma_\sharp \gamma_{R,i}} \omega^- = \oint\limits_{ C_i} \omega^-,
\]
i.e. $\int_{\gamma_{R,i}}\omega^-=\frac{1}{2}\oint_{C_i}\omega^-$. Since $\gamma-\sigma_\sharp\gamma=2\gamma_R$ one has 
\[
\Omega^-_\gamma = \bigg(
\int\limits_{\gamma_R} \omega_i^- 
\bigg)_{k=1
}^{g_\sigma+n-1}=
\frac{1}{2}\bigg(
\sum\limits_{k=1}^{n-1}\oint_{C_k} \omega_{k}^- 
\bigg)_{k=1}^{g_\sigma+n-1}.
\]
Due to the normalization of the holomorphic $1$-forms defined in \eqref{eqn:hol_norm} one has $\Omega^-_\gamma=\frac{1}{2} \sum_{k=1}^{n-1}\Omega_{C_k}^-$.  
If $\Omega^+_\gamma \oplus \Omega^-_\gamma$ would be contained in one of the translated copies of $\Lambda_+\oplus \Lambda_-$, then $\Omega^-_\gamma$ would be contained in the second component of the direct sum in one of the translated copies of $\Lambda_-$ introduced in Lemma \ref{lem:lattice_translate}. This is not possible, since the generators of $\Lambda_+\oplus \Lambda_-$ are linearly independent and only integer linear combinations of $C$-cycles are contained in all translated lattices. 
Therefore, $\Omega_\gamma\not\in\Lambda$ for $n>1$. For $n\leq 1$, there are no $C$-cycles in $H_1(X, \Z)$ and equation \eqref{eqn:divisor_decomposition_2} would read as $ D+\sigma( D)\simeq 0$. So equation \eqref{eqn:divisor_condition_fixed_points} can only hold if $n\leq 1$. Since $P_1$ and $P_2$ are fixed points of $\sigma$ one has $n=1$. \\ Let now $P_1$ and $P_2$ be the only fixed points of $\sigma$. Then Lemma \ref{lem:divisor} yields that there exists a divisor $\widetilde K$ on $X$ with $\deg(\widetilde K)=2g_\sigma-2$ such that $K= \widetilde K+\sigma(\widetilde K)+P_1+P_2$. Define $D:= \widetilde K+P_1+P_2$. The Hurwitz Formula for $n=1$ yields $\deg(D)=2g_\sigma= g$, compare e.g. \cite[Theorem II.4.16]{Mir}, and one has \enlargethispage{2em}
\[
D+\sigma(D)= \widetilde K+\sigma(\widetilde K)+2P_1+2P_2 \simeq K + P_1 + P_2.
\]
\end{proof}
 

\begin{thebibliography}{MMM}
\bibitem[Fo]{Fo} O.\ Forster: Lectures on Riemann surfaces. Graduate Texts in Mathematics {\bf 81}. Springer, New York (1981).
\bibitem[L-B]{LB} H.\ Lange, C.\ Birkenhake: Complex abelian
  varieties. Grund\-leh\-ren der mathematische Wissenschaften
  {\bf 302}. Springer, Berlin (1992).
\bibitem[Mu]{Mu} D.\ Mumford: Prym Varieties I. Contribution to Analysis, 325-350, Academic Press, New York (1974)
\bibitem[Mir]{Mir} R.\ Miranda: Algebraic Curves and Riemann Surfaces. Graduate Studies in Mathematics {\bf 5}. American Mathematical Society, (1995).
\bibitem[N-V]{N-V} S. Novikov, A.P. Veselov: Finite-Zone two-dimensional Schr\"{o}dinger Operators. Potential operators. Soviet Math. Dokl. 30, 588-591 (1984)
\end{thebibliography}
\end{document}